\newtheorem{theorem}{Theorem}
\newtheorem{thm}{Theorem}[section]
\newtheorem{prop}[thm]{Proposition}
\newtheorem{lemma}[thm]{Lemma}
\newtheorem{remark}[thm]{Remark}
\newtheorem{defn}[thm]{Definition}
\newcommand{\Go}{G_0}
\newcommand{\g}{\gamma}
\newcommand{\G}{\Gamma}
\newcommand{\dn}{d\,'}
\newcommand{\fh}{\hat{f}}
\newcommand{\R}{\mathbb R}
\newcommand{\RR}{\mathbb R}
\newcommand{\Z}{\mathbb Z}
\newcommand{\C}{\mathbb C}
\DeclareMathOperator{\supp}{supp}
\newcommand{\p}{\partial}
\newcommand{\vp}{\varphi}
\newcommand{\atopp}[2]{\genfrac{}{}{0pt}{2}{#1}{#2}}
\newcommand{\lam}{\lambda}
\newcommand{\loc}{\text{loc}}
\newcommand{\T}{{\mathbb{T}}}
\date{\small Mathematics Subject Classification: 35J08, 35J15, 35P05, 47A10, 47F05, 47N20}
\begin{document}

\title{Green's function asymptotics near the internal edges of spectra of periodic elliptic operators. Spectral edge case.}

\author{Peter Kuchment\footnote{Mathematics Department, Texas A\&M University,
College Station, TX 77843-3368, USA. kuchment@math.tamu.edu} and Andrew Raich\footnote{Department of Mathematical Sciences, University of Arkansas,
Fayetteville, AR 72701, USA. araich@uark.edu}\\ \emph{Dedicated to the  75th birthday of Professor Eduard Tsekanovskii.}}

\maketitle

\begin{abstract}
Precise asymptotics known for the Green's function of the Laplace operator have found their analogs for periodic elliptic operators of the second order at and below the bottom of the spectrum. Due to the band-gap structure of the spectra of such operators, the question arises whether similar results can be obtained near or at the edges of spectral gaps.
As the result of this work shows, this is possible at a spectral edge when the dimension $d\geq 3$.
\end{abstract}

\date{}
\section{Introduction}

Let
\begin{equation}\label{eqn:L definition}
L(x,D) = \sum_{j,\ell=1}^d \left(D_j +b_j(x)\right)a_{j l}(x) \left(D_l +b_l(x)\right) + c(x),
\end{equation}
be a second-order differential operator in $\R^d$ with smooth real valued coefficients $a_{j l}, b_j, c$. Here $D=(D_1,\dots,D_d)$ and $D_j = -i\p/\p x_j$.

The operator $L$ is assumed to be \textbf{elliptic}, i.e., the matrix $a_{j\ell}$ is symmetric and
\begin{equation}\label{E:ellipt}
\sum_{j,\ell=1}^d a_{j\ell}\xi_j\xi_\ell \geq \theta |\xi|^2
\end{equation}
for some $\theta>0$ and any $\xi = (\xi_1,\dots,\xi_d)\in \R^d$.

We assume that all coefficients are \textbf{$\Z^d$-periodic} where $\Z^d$ is the integer lattice in $\R^d$, i.e., $a_{j l}(x+n) = a_{j l}(x)$ for all $x\in\R^d, n\in\Z^d$ and similarly for $b_j$ and $c$.

If the operator is defined as in (\ref{eqn:L definition}) on the subspace $C_0^\infty(\R^d)\subset L^2(\R^d)$ of compactly supported smooth functions, it is symmetric and essentially self-adjoint \cite{Shubin}. The corresponding self-adjoint operator $L$ has the domain $H^2(\R^d)$.

Most of the conditions above (second order, smoothness of the coefficients, lattice of periods being equal to $\Z^d$, and sometimes even self-adjointness) can be relaxed,
and we postpone a further discussion until Section \ref{S:remarks}.

In order to formulate the problem addressed in this text and state the results, we  introduce some notation
and notions from the spectral theory of periodic elliptic operators (see, e.g. \cite{Kuch_book,ReedSimon,Eastham}).

The spectrum of  the defined above (see \eqref{eqn:L definition}) operator $L$ in $L^2(\R^d)$ has a \textbf{band-gap structure}, i.e., it is the union of a sequence of closed bounded intervals (\textbf{bands}) $[\alpha_j,\beta_j]\subset \R,j=1,2,\dots$ such that $a_j,b_j \mathop{\to}\limits_{j\to\infty}\infty$:
\begin{equation}\label{E:spectrum}
    \sigma(L)=\bigcup\limits_j [\alpha_j,\beta_j].
\end{equation}
The bands can (and do, when $d>1$ \cite{Parn,ParnSob,Skrig}) overlap, but they may leave open intervals in between, called \textbf{spectral gaps}.
Existence of such gaps, in particular, is crucial for the properties of semi-conductors \cite{AM} and photonic crystals \cite{Kuch_pbg,JMW}.

Our interest here is in obtaining asymptotics of the Green's function for the operator $L$ near and at the boundaries of its spectrum. These asymptotics are relevant for many problems
in theory of random walks, Martin boundaries, Anderson localization, and others (e.g., \cite{AizMolch,Murata2,Murata6,Babillot1,BCH}). This has been done for the case of the points at and below the bottom of the spectrum of $L$ \cite{Babillot1,Murata2,Murata6} (see also \cite{Woess} for description of the results of \cite{Babillot1} in discrete case) and at the internal edges in one-dimensional case \cite{Murata6}. The resulting asymptotics are similar to the ones for the Laplace operator $-\Delta$, except the appearance of oscillatory terms. While $\lambda=0$ is the only boundary point of the spectrum of the Laplace operator,
spectral gaps may appear in the periodic case, and thus the question arises of whether one can obtain precise information about asymptotics of the Green's function inside these gaps, or at their edges. The Combes-Thomas estimates (see \cite{BCH} for their contemporary version) provide the decay estimates for the resolvent in terms of the distance to the spectrum. They, however, do not reflect the exact asymptotic behavior of the Green's function. It is the goal of this and the next paper to obtain  asymptotics or (at least) estimates at and near the spectral gap edges. In this text, only the case of a spectral edge is addressed in dimension $d\geq 3$, while the asymptotics of the Green's function inside a spectral gap (near its edge) will be considered elsewhere. The restriction $d\geq 3$ in the spectral edge case is natural, since at the bottom of the spectrum, the operator is critical in dimensions $d\leq 2$ (see \cite{Murata2,Pinsky,LinPinch}).

Let $W=[0,1]^d\subset \R^d$ be the unit cube. $W$ is a \textbf{fundamental domain} of $\R^d$ with respect to the lattice $\G:=\Z^d$, i.e., $W$ and its integer translates cover $\R^d$
and overlap only on the boundaries.

The \textbf{dual} (or \textbf{reciprocal}) \textbf{lattice}\footnote{For a lattice $\G$, the elements $\kappa$ of the dual lattice are characterized by the property that $\kappa\cdot\g\in2\pi\Z$ for any $\g\in\G$.}
of $\G$ is $\G^*:=2\pi\Z^d$  and its fundamental domain is $B=[-\pi,\pi]^d$.
We call $B$ the \textbf{Brillouin zone} (although in the solid state physics usually a somewhat different fundamental domain of the reciprocal lattice bears this name \cite{AM}).

We  also use the notation $\T = \R^d/\Z^d$ and $\T^* = \R^d/2\pi\Z^d$  for the $d$-dimensional tori corresponding to the lattices $\Z^d$ and $2\pi\Z^d$, respectively.

\begin{defn}\label{defn:H sk}
For any $k\in\C^d$, the subspace $H^s_k = H^s_k(W)\subset H^s(W)$ consists of the
restrictions to $W$ of functions $f\in H^s_{\loc}(\R^d)$ that satisfy the \textbf{Floquet-Bloch condition} that for any $\gamma\in\Z^d$
\begin{equation}\label{E:floquet_cond}
f(x+\gamma) = e^{ik\cdot\gamma} f(x)\ \text{a.e.}
\end{equation}
Here $H^s$ denotes the standard Sobolev space of order $s$.
\end{defn}
\noindent Note that $H^0_k = L^2(W)$.

We adopt the name \textbf{quasimomenta}\footnote{The name comes from the solid state physics \cite{AM,ReedSimon}.} for vectors $k$ in (\ref{E:floquet_cond}).

The periodicity of the coefficients of the differential operator $L(x,D)$ preserves condition (\ref{E:floquet_cond}). It thus defines an operator $L(k)$ in $L^2(W)$ with domain
$H^2_k(W)$. We could alternatively have defined $L(k)$ as the operator $L(x,D+k)$ in $L^2(\T)$ with  domain $H^2(\T)$. In the first model, $L(k)$ is realized as the $k$-independent differential expression $L(x,D)$ acting on functions in $W$ with boundary conditions depending on $k$ (which can be identified with sections of a linear bundle over the torus $\T$), while in the second one, the $k$-dependent differential expression $L(x,D+k)$ acts on the $k$-independent domain of periodic functions on $W$.

Notice that condition (\ref{E:floquet_cond}) does not change when $k$ is modified by adding an element of $2\pi\Z^d$. This, when dealing with real values of quasimomentum $k$, considering vectors $k\in B$ is sufficient.

The hypotheses on $L$ force the spectrum $\sigma(L(k))$  of the operators $L(k)$, $k\in\R^d$, to be discrete.
The following classical result (see \cite{AM,Kuch_book,ReedSimon} and the references therein) describes the spectrum of the operator $L$ in $L^2(\R^d)$:
\begin{theorem}\label{T:fl_spectrum} The following equality holds between the spectra of $L$ and $L(k)$:
\begin{equation}\label{E:fl_spectrum}
   \sigma(L)=\bigcup\limits_{k\in B}\sigma(L(k)).
\end{equation}
\end{theorem}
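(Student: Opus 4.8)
The plan is to deduce \eqref{E:fl_spectrum} from the \emph{Floquet--Gelfand transform} together with the spectral theory of direct integrals of self-adjoint operators. First I would introduce the transform $\U\colon L^2(\R^d)\to\int_B^\oplus H^0_k(W)\,dk$ (with the normalized measure $dk/|B|$ on $B$), given on $C_0^\infty(\R^d)$ by $(\U f)(k,x)=\sum_{\gamma\in\Z^d}f(x+\gamma)e^{-ik\cdot\gamma}$; since the right-hand side obeys the Floquet condition \eqref{E:floquet_cond}, indeed $(\U f)(k,\cdot)\in H^0_k(W)$ for each $k\in B$. A Plancherel-type computation — expanding in a Fourier series over the lattice — shows that $\U$ extends to a unitary operator onto the direct integral, and that it intertwines the $\Z^d$-translations $f\mapsto f(\cdot+\gamma)$ on $L^2(\R^d)$ with the fiberwise multiplications $u(k,\cdot)\mapsto e^{ik\cdot\gamma}u(k,\cdot)$.

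Next, using that the coefficients of $L(x,D)$ are $\Z^d$-periodic, so that $L$ commutes with the lattice translations, I would show that $\U$ diagonalizes $L$: $\U L\,\U^{-1}=\int_B^\oplus L(k)\,dk$, where $L(k)$ acts by the fixed differential expression $L(x,D)$ on the fiber domain $H^2_k(W)$. The delicate point is the identification of domains: one must check that $\U$ carries $H^2(\R^d)$ onto $\{u:\ u(k,\cdot)\in H^2_k(W)\ \text{for almost every }k,\ \int_B\|u(k,\cdot)\|_{H^2(W)}^2\,dk<\infty\}$, the domain of the direct-integral operator. This rests on ellipticity and periodicity, which make $\|f\|_{H^2(\R^d)}^2$ comparable to $\int_B\|(\U f)(k,\cdot)\|_{H^2_k(W)}^2\,dk$, together with the (routine) measurability of the field $k\mapsto L(k)$. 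The general theorem on spectra of direct integrals of self-adjoint operators (for instance, \cite[Thm.~XIII.85]{ReedSimon}) then gives
\[
\sigma(L)=\Big\{\lambda\in\R:\ \big|\{k\in B:\ \mathrm{dist}(\lambda,\sigma(L(k)))<\varepsilon\}\big|>0\ \text{for all}\ \varepsilon>0\Big\}.
\]

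To identify this set with $\bigcup_{k\in B}\sigma(L(k))$, I would show that the latter union is already closed, so the measure-theoretic description collapses to the plain union. Since $L(k)$ is elliptic and self-adjoint and $W$ is bounded, $L(k)$ has compact resolvent, so $\sigma(L(k))$ consists of eigenvalues $\lambda_1(k)\le\lambda_2(k)\le\cdots\to+\infty$; moreover the family $k\mapsto L(x,D+k)$ on $L^2(\T)$ is analytic of type (A), so each band function $\lambda_j$ is continuous on the compact set $B$ and has compact range $[\alpha_j,\beta_j]$, with $\alpha_j\to\infty$. Hence $\bigcup_j[\alpha_j,\beta_j]$ is a locally finite union of closed intervals, and therefore closed. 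Combined with the displayed formula for $\sigma(L)$ — note that if $\lambda\in\sigma(L(k_0))$ then, by continuity of the band functions, $\{k:\mathrm{dist}(\lambda,\sigma(L(k)))<\varepsilon\}$ is open and nonempty, hence of positive measure, so $\lambda\in\sigma(L)$ — this yields the asserted equality.

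I expect the main obstacle to be the second step: rigorously establishing the unitary equivalence $\U L\,\U^{-1}=\int_B^\oplus L(k)\,dk$ with the correct matching of operator domains and measurable structure. Everything else is either standard direct-integral theory or the elementary compactness and continuity argument above. A more hands-on alternative that sidesteps direct integrals would prove the two inclusions directly: ``$\supseteq$'' by multiplying a Bloch eigenfunction of $L(k_0)$ by a slowly varying cutoff to produce a Weyl sequence for $L$ on $\R^d$, and ``$\subseteq$'' by applying $\U$ to a Weyl sequence for $L$ at $\lambda$ and localizing it in the quasimomentum $k$.
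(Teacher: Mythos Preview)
Your proposal is correct and is essentially the standard argument found in the references cited in the paper. Note, however, that the paper does not actually give its own proof of this statement: it is presented as a classical result with a pointer to \cite{AM,Kuch_book,ReedSimon}, so there is no ``paper's proof'' to compare against beyond observing that your Floquet--Gelfand transform plus direct-integral approach is precisely the one developed in those sources (in particular \cite[Thm.~XIII.85 and surrounding material]{ReedSimon} and \cite{Kuch_book}).
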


In other words, the spectrum of $L$ is the range of the multiple valued function
\begin{equation}\label{E:sp_function}
   k\to \lam(k): = \sigma(L(k)), k\in B.
\end{equation}
The graph of this function in $B\times \R$ is called the \textbf{dispersion relation} or \textbf{dispersion curve}\footnote{Another name of this graph is {\bf Bloch variety} of the operator $L$, e.g. \cite{Kuch_book}.} for the operator $L$:
\begin{equation}\label{E:disp_relation}
\{(k,\lam) : \lam\in\sigma\big(L(k)\big)\}.
\end{equation}
Taking into account that the operator $L(k)$ is bounded from below, one can label its eigenvalues in non-decreasing order:
\begin{equation}\label{E:eigenv}
   \lam_1(k)\leq \lam_2(k)\leq \dots .
\end{equation}
One thus gets the single-valued, continuous, and piecewise analytic \textbf{band functions} $\lam_j(k)$. The ranges of these functions constitute exactly the bands of the spectrum of $L$ shown in (\ref{E:spectrum}).

In this text, we study the Green's function at an edge of the spectrum. Without loss of generality, by shifting the operator by a constant and changing its sign if necessary, we can assume that the edge of interest is $0$ and the adjacent spectral band is $[0,a]$ for some $a>0$ and there is no spectrum for small values of $\lam$ below zero. Thus, there is a spectral gap below zero.
This means in particular that
zero is the minimal value of at least one of the band functions $\lam_j(k)$.

If zero is the bottom of the entire spectrum, then in many cases (e.g., for periodic non-magnetic Schr\"{o}dinger operators), it is known that the minimum is attained by a single band $\lambda_j(k)$
function at a single value of the quasimomentum $k_0$ (modulo $2\pi\Z^2$-periodicity) and is non-degenerate, i.e., the Hessian of $\lambda(k)$ is non-degenerate at $k_0$
\cite{KirschSimon,Pinsky,BirmSu2,BirmSu7,BirmSu8}\footnote{In the non-magnetic Schr\"{o}dinger case, $k_0=0$ \cite{KirschSimon}.}.
Although this is not necessarily always true for other spectral edges, it is expected and commonly assumed in the mathematics and physics literature that it is ``generically'' true, i.e.,
it can be achieved by small perturbation of coefficients of the operator. We thus impose the following assumption, under which we will be able to establish our main result:

{\bf Assumption A}\\
{\em There exist $k_0\in B$ and a band function $\lam_j(k)$ such that:
\begin{description}
\item[A1]  $\lam_j(k_0)=0$,
\item[A2] $|\lam_i(k)|>\delta>0$ for $i\neq j$,
\item[A3] $k_0$ is the only (up to $2\pi\Z^2$-periodicity) minimum of $\lam_j$,
\item[A4] $\lam_j(k)$ is a Morse function near $k_0$, that is, its Hessian
$H$ at $k_0$ is a positively definite quadratic form.
\end{description}
Thus, in particular,
\begin{equation}\label{E:lambda}
\lam_j(k) = \frac 12 (k-k_0)^T H(k-k_0) + O(|k-k_0|^3)
\end{equation}
is the Taylor expansion of $\lam_j$ at $k_0$.}

The parts A1 and A2 of this assumption are known to be generically true for Schr\"{o}dinger operators \cite{KloppRalston}. It is commonly believed, although not proven, that conditions A3 and A4 also are generically satisfied. Our result can be easily reformulated for a somewhat weaker condition than A3. Namely, one can assume
{\em
\begin{description}
\item[A3$\boldmath^{\prime}$]
There are finitely many (up to $2\pi\Z^2$-periodicity) minima of $\lam_j$.
\end{description}}

Validity of the non-degeneracy condition A4 is often assumed, for instance in order to define the  effective masses \cite{AM} or when studying emergence of impurity states under localized perturbations of the periodic medium \cite{BirDis1,BirDis2,BirDis3}.

In what follows, the branch $\lambda_j(k)$ plays a special role, and from now on we  use the shorthand notation
\begin{equation}\label{E;short}
\lambda(k):=\lambda_j(k).
\end{equation}

Let us also introduce the notation $\vp(k_0,x)$ for a $\Z^d$-periodic function of $x$ such that $\psi(x)=e^{ik_0\cdot x}\vp(k_0,x)$ is the (unique up to a constant factor) eigenfunction of $L(k_0)$ with the eigenvalue $\lam(k_0)$:
\begin{equation}
 L(k_0)\psi=\lam(k_0)\psi.
\end{equation}

We are ready now to state the main result of this text:

\begin{theorem}\label{T:main} Let $d\geq 3$, the operator $L$ satisfy the assumption A, and $R_{-\epsilon}=(L+\epsilon)^{-1}$ for a small $\epsilon>0$ denote the resolvent of $L$ near the spectral edge $\lambda=0$ (which exists, due to Assumption A). Then:
\begin{enumerate}
\item For any $\phi,\psi\in L^2_{comp}(\R^d)$,
$$
R_{-\epsilon}\phi\cdot\psi\mathop{\to}\limits_{\epsilon\to 0} R\phi\cdot\psi
$$
for an operator $R:L^2_{comp}(\R^d)\mapsto L^2_{loc}(\R^d)$.
\item The Schwartz kernel $G(x,y)$ of $R$, which we will simply call {\bf the Green's function of $L$}, has the following asymptotics when
$|x-y|\to\infty$:
\begin{align}
\label{E:expansion}
 &G(x,y) \nonumber \\
 &= \frac{\pi^{-d/2}\Gamma(\frac{d-2}{2}) e^{i(x-y)\cdot k_0}}
{2\sqrt{\det H}\, |H^{-1/2}(x-y)|^{d-2}}
\frac{\vp(k_0,x) \overline{\vp(k_0,y)}} {\|\vp(k_0)\|_{L^2(\T)}^2} \left(1 +  O\left(|x-y|^{-1}\right) \right) \\&+r(x,y),\nonumber
\end{align}
where $r(x,y)=O(|x-y|^{-d+1})$ .
\end{enumerate}
\end{theorem}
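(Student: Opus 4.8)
The plan is to exploit the Floquet-Bloch transform to write the resolvent $R_{-\epsilon}$ as a direct integral over the Brillouin zone and isolate the singular contribution coming from the band $\lambda(k)$ that touches zero at $k_0$. Concretely, for $\phi, \psi \in L^2_{comp}$ one has
\begin{equation*}
R_{-\epsilon}\phi\cdot\psi = \frac{1}{(2\pi)^d}\int_B \big((L(k)+\epsilon)^{-1}\widehat\phi(k,\cdot)\big)\cdot\widehat\psi(k,\cdot)\, dk,
\end{equation*}
and using the spectral decomposition of the fiber operator $L(k)$, the term involving the eigenvalue $\lambda(k)$ contributes a factor $(\lambda(k)+\epsilon)^{-1}$ times a rank-one piece built from the Bloch eigenfunction $\psi(k,x) = e^{ik\cdot x}\varphi(k,x)$. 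By Assumption A2 the remaining spectral subspace of $L(k)$ is bounded away from zero uniformly in $k$, so that part of the sum stays bounded as $\epsilon\to 0$ and is smooth in $k$; this already yields convergence to an operator $R$ (part 1) and shows that this non-resonant part contributes only to the remainder $r(x,y)$, which by integration by parts in $k$ (the integrand being smooth and periodic) decays faster than any power — in particular it is $O(|x-y|^{-d+1})$. So the whole asymptotics is governed by
\begin{equation*}
\frac{1}{(2\pi)^d}\int_B \frac{e^{i(x-y)\cdot k}\,\varphi(k,x)\overline{\varphi(k,y)}}{\lambda(k)+\epsilon}\, dk,
\end{equation*}
up to lower-order terms coming from the $k$-dependence of $\varphi$ away from $k_0$.

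Next I would localize near $k_0$. Using A3 (or A3$'$) the minimum is attained only at $k_0$, and A4 guarantees $\lambda(k) \geq c|k-k_0|^2$ away from a neighborhood of $k_0$; hence a smooth cutoff $\chi$ supported near $k_0$ splits the integral into a piece where $\lambda(k)+\epsilon$ is bounded below (handled as before, contributing to $r$) and a local piece. On the support of $\chi$ I substitute the Taylor expansion \eqref{E:lambda}, writing $\lambda(k) = \frac12 (k-k_0)^T H (k-k_0)(1 + O(|k-k_0|))$, change variables $\xi = H^{1/2}(k-k_0)$ so that the denominator becomes $\frac12|\xi|^2 + \epsilon$ up to a multiplicative $(1+O(|\xi|))$ factor and a Jacobian $(\det H)^{-1/2}$, and freeze the smooth amplitude $\varphi(k,x)\overline{\varphi(k,y)}$ at its value $\varphi(k_0,x)\overline{\varphi(k_0,y)}$, the error being one power of $|\xi|$ better. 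The leading term is then, up to the oscillatory prefactor $e^{i(x-y)\cdot k_0}$ and the constant $\frac{\varphi(k_0,x)\overline{\varphi(k_0,y)}}{(2\pi)^d\sqrt{\det H}}$, the classical integral
\begin{equation*}
\int_{\R^d} \frac{e^{i\,(H^{-1/2}(x-y))\cdot \xi}}{\tfrac12|\xi|^2+\epsilon}\, d\xi,
\end{equation*}
which as $\epsilon\to 0$ is (a constant multiple of) the Green's function of $-\Delta$ in $\R^d$, i.e. $\mathrm{const}\cdot |H^{-1/2}(x-y)|^{-(d-2)}$, with the constant $\pi^{-d/2}\Gamma(\tfrac{d-2}{2})/2$ after bookkeeping the factors of $2$ and $(2\pi)^d$; here $d\geq 3$ is exactly what makes this integral converge (criticality in low dimensions). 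This produces the stated main term.

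The remaining work is to control all the error terms and show they are genuinely $O(|x-y|^{-1})$ relative to the main term (equivalently $O(|x-y|^{-d+1})$ absolutely). There are three sources: (i) replacing $\lambda(k)$ by its quadratic part — this multiplies the integrand by $1+O(|\xi|)$, and an extra factor $|\xi|$ in the numerator of $|\xi|^{-2}$-type integrals improves decay by one power of $|x-y|$ by a scaling/stationary-phase argument; (ii) replacing the amplitude $\varphi(k,x)\overline{\varphi(k,y)}$ by its value at $k_0$ — same mechanism, one extra power of $|\xi|$; (iii) the cutoff and non-resonant contributions, which are smooth and compactly supported in $k$ and hence decay faster than any power. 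The cleanest way to make (i) and (ii) rigorous uniformly is to expand $1/(\lambda(k)+\epsilon)$ as a geometric-type series in the perturbation and estimate each resulting oscillatory integral $\int e^{i(x-y)\cdot k} a(k)(\tfrac12|\xi|^2+\epsilon)^{-1}|\xi|^m\chi\,dk$ by a Bessel-function or direct scaling computation, noting these are $O(|x-y|^{-(d-2)-m})$ for $m\geq 0$ with constants uniform in $\epsilon$. I expect the main obstacle to be precisely this uniform (in $\epsilon$, as $\epsilon\to 0$) control of the error integrals near $k_0$ — one must be careful that the expansion of $1/(\lambda(k)+\epsilon)$ does not lose the uniformity, since naive term-by-term bounds blow up; the resolution is to keep $\epsilon$ in the denominator throughout, use that $\lambda(k)+\epsilon \asymp |\xi|^2+\epsilon$ on the support of $\chi$, and pass to the limit only at the very end using dominated convergence justified by the $d\geq 3$ integrability of $|\xi|^{-2}$ near the origin. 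A secondary technical point is verifying that $\varphi(k,x)$ is smooth in $k$ near $k_0$ (which follows from A2 and analytic perturbation theory for the simple isolated eigenvalue $\lambda(k)$) and that its $x$-dependence is uniformly controlled on compact sets, which is needed to pull the amplitude out of the integral.
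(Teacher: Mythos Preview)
Your proposal is correct and follows essentially the same route as the paper: Floquet reduction, projection onto the rank-one Bloch eigenspace near $k_0$ versus its complement, localization by a cutoff, the affine change $\xi=H^{1/2}(k-k_0)$, and a geometric-series expansion of the denominator to isolate the Newtonian potential and control the homogeneous correction terms.

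One tactical difference is worth noting. You propose to carry $\epsilon>0$ through the entire asymptotic analysis and worry about uniformity of the error bounds in $\epsilon$; the paper instead disposes of $\epsilon$ at the outset by a direct dominated-convergence argument (this is Lemma~\ref{L:bilin}): since $\lambda(k)$ has a single nondegenerate quadratic zero and $d\ge 3$, the function $(\lambda(k))^{-1}$ is locally integrable on $\T^*$, so $(\lambda(k)+\epsilon)^{-1}\to\lambda(k)^{-1}$ in $L^1$ against the smooth numerator. After that, all of the oscillatory-integral work is done at $\epsilon=0$, which removes the uniformity issue you flagged as the main obstacle. Your plan would also work, but it is heavier than necessary. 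For the error terms (your (i) and (ii)), the paper makes your ``extra power of $|\xi|$ gains one power of decay'' heuristic precise by truncating the geometric series so that the remainder lies in $C^d_0$, reducing each surviving term $\xi^\alpha/|\xi|^{2n+2}$ by integration by parts to a function homogeneous of degree $-1$, and invoking the elementary bound that the Fourier transform of a degree $-1$ homogeneous function cut off near the origin is $O(|x_0|^{-(d-1)})$; this is a cleaner substitute for the stationary-phase/Bessel estimates you sketch.
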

Here $H$ is the Hessian matrix from (\ref{E:lambda}) and the standard notation $L^2_{comp}(\R^d)$ and $L^2_{loc}(\R^d)$ are used for the spaces of square integrable functions with compact support and locally square integrable functions on $\R^d$ correspondingly.

The idea of the proof is to show that only one branch of the dispersion relation (namely, the one appearing in Assumption A) governs the asymptotics
with the crucial term being the Hessian $H$ at the point $k_0$. This appears like an analog of homogenization (which is an effect that occurs at the bottom of the spectrum only), and indeed it is a version of it, along with other effects described in \cite{BS3,Birman-homog,BirmSu8,KuchPin1,Kuch_Molch,KuchPin2}.

We prove Theorem \ref{T:main} in Section \ref{S:proof} after introducing the tools of Floquet theory \cite{ReedSimon,Kuch_book} in Section \ref{S:floquet}.
Section \ref{S:remarks} contains  final remarks and is followed by an Acknowledgments Section.

\section{A Floquet reduction of the problem}\label{S:floquet}

In this section we use Assumption A and reduce  the problem of obtaining asymptotics for the Green's function $G(x,y)$ to the problem of finding asymptotics of a scalar integral expression.
The scalar integral we obtain resembles the Green's function of the Laplacian.

\subsection{Floquet transform}\label{SS:floquet}

Let $\gamma\in\R^d$ and a function $f(x)$ be defined on $\R^d$. We then denote by $f_\gamma$ the $\gamma$-shifted version of $f$: $f_\gamma(x)= f(x+\gamma)$. Assuming that $f(x)$ decays sufficiently fast (e.g., $f$ is compactly supported, although $f\in L^2$ would suffice, since it guarantees the $L^2_{loc}$-convergence in (\ref{eqn:hat f(k)}) below), the following transform
plays the role of Fourier transform when studying periodic problems (in fact, it \textbf{is} the Fourier transform with respect to the lattice $\Z^d$ of periods):
\begin{defn}\label{defn:Floquet transform}
The \textbf{Floquet transform} (also sometimes called \textbf{Gelfand transform})
\begin{equation}\label{E:transform}
f(x)\mapsto \hat{f}(k,x)
\end{equation}
maps a function $f(x)$ of $x\in\R^d$ into a function $\hat f(k,x)$ of $(k,x)\in B\times W$, defined as follows:
\begin{equation}\label{eqn:hat f(k)}
\hat f(k) = \hat f(k,x) = \sum_{\gamma\in\Z^d}f_\gamma(x) e^{-ik\cdot\gamma}
= \sum_{\gamma\in\Z^d}f(x+\gamma)e^{-ik\cdot\gamma}.
\end{equation}
\end{defn}
Notice that $\hat{f}(k,x)$ is $2\pi\Z^d$-periodic with respect to $k$ (recall that $\G^*:=2\pi\Z^d$ is the dual lattice to $\G:=\Z^d$). Thus, it can be naturally interpreted\footnote{Formally speaking, this interpretation requires to change the quasimomentum variable $k$ to the {\bf Floquet multiplier} variable $z=(e^{ik_1},... ,e^{ik_d})$. We will abuse notation (e.g., in Lemma \ref{L:floquet}), staying with the same variable $k$.} as a function on the torus $\T^*$ with respect to $k$. With respect to the $x$-variable, though, the function is not periodic, but rather satisfies the {\bf cyclic} (or {\bf Floquet}) condition
\begin{equation}\label{E:cyclic}
\hat f(k,x+\g)  =e^{ik\cdot\g}\hat f(k,x)
\end{equation}
for all $\g\in\Z^d$. Thus, the function $\hat f(k,x)$ is completely determined by its values on $B\times W$.
We  usually consider it as a function $\hat f(k,\cdot)$ of $k\in B$ with values in a space of functions defined on $W$.

The next result collects some well known properties of the Floquet transform (e.g., \cite{Kuch_book}).
\begin{lemma}\label{L:floquet}
\indent
\begin{enumerate}
\item The Floquet transform is an isometry (up to a scalar factor) between $L^2(\R^d)$
and $\int\limits_B^{\oplus} L^2(W)\, dk = L^2(B, L^2(W))$.

\item Its inversion for any $f \in L^2(\RR^n)$ is given by the following two equivalent formulas:
\begin{equation}\label{E:Gelf_inversion}
f(x)=\int\limits_{\T^*} \fh (k,x) \dn k, \,\, x \in \R^d
\end{equation}
and
\begin{equation}\label{E:Gelf_inversion2}
f(x)=\int\limits_{\T^*} \fh (k,x-\g) e^{ik \cdot \g} \dn k, \,\,
x \in W+\g.
\end{equation}
 \item Under the Floquet transform, the action of the operator $L$ in $L^2(\R^d)$ is transformed into $$\int\limits^\bigoplus_B L(k),$$ i.e.,
\[
\widehat{Lf}(k)= L(k)\widehat{f}(k)
\]
for any $f\in H^2(\R^d)$.
\item Let $\phi (k,x)$ be a function on $\R_k^d\times\R^d_x$ such that it is $2\pi\Z^d$-periodic with respect to $k$, belongs  to the space $H^s_k$ for each $k$
(see Definition \ref{defn:H sk}), and is an infinitely differentiable function of $k$ with values in $H^s_k$. Then $\phi$ is the Floquet transform of a function $f\in H^s(\R^d)$.
Moreover, for any compact set $K\subset\R^d$ the norm $\|f\|_{H^s(K+\g)}$ decays faster than any power of $|\g|$. In particular, if $s>d/2$, then
    $$
    |f(x)|\leq C_N(1+|x|)^{-N}\mbox{ for any }N.
    $$
\end{enumerate}
\end{lemma}
Here $\dn k$ denotes the normalized Haar measure on $\T^*$, or the corresponding Lebesgue measure on $B$.

\subsection{The Floquet reduction}\label{SS:fl_reduction}

We start by considering the resolvent in the spectral gap:
\begin{equation*}
R_{-\epsilon}f=(L+\epsilon)^{-1}f \mbox{ for } f\in L^2_{comp}(\R^d) \mbox{ and a small positive }\epsilon.
\end{equation*}
Applying the Floquet transform,
\begin{equation*}
\widehat{R_{-\epsilon}f}(k)=(L(k)+\epsilon)^{-1} \hat f(k).
\end{equation*}
Given another compactly supported function $\psi\in L^2_{comp}(\R^d)$, we begin our analysis with the sesquilinear form
\begin{equation}\label{E:bilinform}
    (R_{-\epsilon}f,\psi)=\int_{T^*}\left((L(k)+\epsilon)^{-1} \hat f(k),\hat \psi(k)\right)\dn k.
\end{equation}
In the following lemma, we prove the first claim of Theorem \ref{T:main} and introduce an expression
for the operator $R=\lim_{\epsilon\to 0}R_{-\epsilon}$ that is the basis of our study of the asymptotics of the Green's function.
\begin{lemma}\label{L:bilin} Let the Assumption A be satisfied. Then the following equality holds:
\begin{equation}\label{E:limit}
    \lim_{\epsilon\to 0}(R_{-\epsilon}f,\psi)=\int_{T^*}\left(L(k)^{-1} \hat f(k),\hat \psi(k)\right)\dn k.
\end{equation}
Thus, the Green's function is the kernel of the operator $R$ defined as follows:
\begin{equation}\label{E:R}
    \widehat{Rf}(k)=L(k)^{-1} \hat f(k).
\end{equation}
The integral in the right hand side of (\ref{E:limit}) is absolutely convergent for $f,\psi\in L^2_{comp}$.
\end{lemma}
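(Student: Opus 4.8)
The plan is to separate the contribution of the distinguished band $\lambda(k)=\lambda_j(k)$ near its minimum $k_0$ from the rest of the spectral decomposition, and to show that the latter part converges trivially while the former requires a genuine limiting argument. First I would use the spectral decomposition of the self-adjoint fiber operators $L(k)$: write $(L(k)+\epsilon)^{-1}=\sum_m (\lambda_m(k)+\epsilon)^{-1} P_m(k)$, where $P_m(k)$ is the spectral projector onto the eigenspace for $\lambda_m(k)$. By Assumption A2, all bands other than the $j$-th satisfy $|\lambda_i(k)|>\delta$, so for $|\epsilon|<\delta/2$ the operators $(\lambda_i(k)+\epsilon)^{-1}P_i(k)$ are uniformly bounded in $k$ and depend continuously on $\epsilon$ (indeed the sum $\sum_{i\neq j}(\lambda_i(k)+\epsilon)^{-1}P_i(k) = \bigl(L(k)+\epsilon\bigr)^{-1}\bigl(I-P_j(k)\bigr)$ is uniformly bounded as an operator on $L^2(W)$, with bound controlled by $\delta$, and the resolvent identity gives Lipschitz dependence on $\epsilon$). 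Hence by dominated convergence — using that $\hat f(k),\hat\psi(k)$ are $L^2(W)$-valued continuous (in fact smooth) functions of $k\in B$ by Lemma \ref{L:floquet}(4), so bounded on the compact $B$ — this part of the integrand converges to $(L(k)^{-1}(I-P_j(k))\hat f(k),\hat\psi(k))$ uniformly.

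The remaining piece is $\int_{T^*}(\lambda_j(k)+\epsilon)^{-1}\bigl(P_j(k)\hat f(k),\hat\psi(k)\bigr)\,\dn k$. Here the integrand has a potential singularity only near $k=k_0$, where $\lambda_j(k_0)=0$. I would split $B=U\cup (B\setminus U)$ for a small neighborhood $U$ of $k_0$ (using A3, the only minimum). Away from $U$ we have $\lambda_j(k)\geq c>0$, so that part again converges by dominated convergence to $\int_{B\setminus U}\lambda_j(k)^{-1}(P_j(k)\hat f(k),\hat\psi(k))\,\dn k$. On $U$, the Morse assumption A4 and the Taylor expansion \eqref{E:lambda} give $\lambda_j(k)\geq \frac14 (k-k_0)^T H (k-k_0)\geq c'|k-k_0|^2$ after shrinking $U$, so that $(\lambda_j(k)+\epsilon)^{-1}\leq (\lambda_j(k))^{-1}\leq C|k-k_0|^{-2}$, and the numerator $(P_j(k)\hat f(k),\hat\psi(k))$ is bounded (continuous in $k$). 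Since $d\geq 3$, the function $|k-k_0|^{-2}$ is integrable near $k_0$ in $\R^d$, so $\lambda_j(k)^{-1}\bigl(P_j(k)\hat f(k),\hat\psi(k)\bigr)\in L^1(U)$, giving the integrable dominating function; dominated convergence then yields convergence of the $U$-piece as well, and simultaneously establishes the claimed absolute convergence of the right-hand side of \eqref{E:limit}. Recombining the three pieces gives $\lim_{\epsilon\to0}(R_{-\epsilon}f,\psi)=\int_{T^*}(L(k)^{-1}\hat f(k),\hat\psi(k))\,\dn k$, which by definition identifies the limiting operator $R$ via $\widehat{Rf}(k)=L(k)^{-1}\hat f(k)$ (a well-defined element of $L^2(W)$ for a.e.\ $k$, since $L(k)$ is invertible for $k\neq k_0$ and the inverse is square-integrable against $\hat f(k)$ near $k_0$).

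The main obstacle is the singularity at $k_0$: one must be careful that $L(k)^{-1}\hat f(k)$ is genuinely an $L^2(W)$-valued function that is integrable, not merely that the scalar pairing converges. The key quantitative input is the lower bound $\lambda_j(k)\geq c'|k-k_0|^2$ from A4 combined with the dimensional restriction $d\geq 3$ that makes $|k-k_0|^{-2}$ locally integrable — this is precisely where the hypothesis $d\geq3$ enters, matching the criticality remark in the introduction. A secondary technical point is justifying that the eigenprojector $P_j(k)$ and eigenvalue $\lambda_j(k)$ are analytic near $k_0$ (so that the Taylor expansion \eqref{E:lambda} is legitimate and $P_j(k)\hat f(k)$ is continuous there); this follows from A1–A2, which guarantee that $\lambda_j(k)$ is a simple, isolated eigenvalue in a neighborhood of $k_0$, so Kato–Rellich analytic perturbation theory applies. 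Once these two points are in hand, everything else is routine dominated convergence on the compact torus.
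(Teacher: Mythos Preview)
Your argument is correct and is precisely the alternative route that the paper itself flags in Remark~\ref{R:limit}. The paper's own proof proceeds differently: it invokes the global representation $(L(k)-\lambda)^{-1}=h(k,\lambda)^{-1}I(k,\lambda)$ from \cite{Kuch_book,Wilcox}, where $h$ is an entire scalar function on $\C^{d+1}$ vanishing simply on the complex Bloch variety and $I$ is an entire operator-valued function; this recasts the integrand in \eqref{E:bilinform} as a smooth numerator divided by the single scalar $\lambda(k)+\epsilon$, after which the same $|k-k_0|^{-2}$ integrability (from A4 and $d\geq 3$) and dominated convergence finish the job. Your spectral-projector decomposition avoids the regularized-determinant machinery entirely and is more elementary and self-contained, at the cost of treating the pieces $P_j(k)$ and $I-P_j(k)$ separately; the paper's approach buys a single clean scalar denominator from the outset but imports heavier tools. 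One small point worth tightening: $P_j(k)$ is guaranteed to be rank-one and analytic only in a neighborhood of $k_0$ (A2 bounds $|\lambda_i(k)|$ for $i\neq j$ but does not preclude $\lambda_j(k)=\lambda_{j\pm1}(k)$ far from $k_0$), so it is cleaner to localize first---on $B\setminus U$ use directly that $L(k)$ is invertible with uniformly bounded inverse, and reserve the $P_j/(I-P_j)$ split for $k\in U$---rather than invoking $P_j(k)$ over all of $B$. This is a cosmetic reorganization; the substance of your argument is sound.
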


\begin{proof} The \textbf{complex Bloch variety} of the operator $L$, denoted by $\Sigma$,  consists of all complex pairs $(k,\lambda)\in\C^{d+1}$ such that the operator $L(k)-\lambda$ on the torus $\T$ is not invertible. This variety is an analytic subset of co-dimension $1$ (e.g., \cite{Kuch_book}). Moreover,
there exist an entire scalar function $h(k,\lambda)$ on $\C^{d+1}$ and an entire operator-valued function $I(k,\lambda)$, such that

1) $h$ vanishes only on (the nonsingular) $\Sigma$,
has simple zeros at all points of $\Sigma$, and

2) outside $\Sigma$, the inverse operator $(L(k)-\lambda)^{-1}=h(k,\lambda)^{-1}I(k,\lambda)$ \cite{Kuch_book,Wilcox}.

In particular, for a small $\epsilon>0$, our operator of interest $(L(k)+\epsilon)^{-1}=h(k,-\epsilon)^{-1}I(k,-\epsilon)$. Due to  assumption A, for small $\epsilon>0$ and quasimomenta $k$ with small imaginary parts, the function $h(k,-\epsilon)^{-1}$ is equal, up to a non-vanishing analytic factor, to $(\lambda(k)+\epsilon)^{-1}$. Thus, for such values of $k$ and $\epsilon$, one can write the bilinear form $(R_{-\epsilon}f,\phi)$ as
\begin{equation}\label{E:simplified}
    (R_{-\epsilon}f,\phi)=\int\limits_{\T^*}\frac{\left(M(k,\epsilon)\hat f(k),\hat\phi (k)\right)}{\lambda(k)+\epsilon}\dn k.
\end{equation}
Here $M(k,\epsilon)$ is an operator-function in $L^2(\T)$, analytic for small $\epsilon$ and nearly real $k$. The function $\lambda(k)$, due to Assumption A, is a non-negative function with a single simple non-degenerate zero on $\T$. Also, due to the compactness of support of $f(x)$ and $\psi(x)$ and (\ref{eqn:hat f(k)}), the functions $\hat f(k), \hat \psi(k)$ are analytic with respect to $k$. Thus, for $\epsilon>0$, the integrand in (\ref{E:simplified}) is nonsingular. Since $d\geq 3$, it is still integrable when $\epsilon=0$. Hence, a straightforward application of the dominated convergence theorem finishes the proof.
\end{proof}
\begin{remark}\label{R:limit}
   Lemma \ref{L:bilin} could be proven without using the results of \cite{Kuch_book,Wilcox} concerning the representation of the resolvent $(L(k)-\lambda)^{-1}=h(k,\lambda)^{-1}I(k,\lambda)$ (which requires some regularized infinite determinant techniques). One could instead achieve the same using eigenfunction expansions that are employed below.
\end{remark}

\subsection{Singling out the principal term in $R$}\label{SS:principal}
As we have just observed, the Green's function in question is the kernel of the operator $R$ that acts according to (\ref{E:R}): for a compactly supported $L^2$-function $f$, the Floquet transform of $Rf$ is given by  $\widehat{Rf}(k)=L(k)^{-1} \hat f(k)$. In other words,
\begin{equation}\label{E:R_expl}
   Rf(x)=\int\limits_{\T^*}L(k)^{-1} \hat f(k,x)\dn k \mbox{ for }x\in\R^d,
\end{equation}
or
\begin{equation}\label{E:R_expl2}
   Rf(x+\g)=\int\limits_{\T^*}e^{ik\cdot\g}L(k)^{-1} \hat f(k,x)\dn k \mbox{ for }x\in W, \g\in\Z^d.
\end{equation}
In this section, we will single out the part of the integral in (\ref{E:R_expl2}) that is
responsible for the principal term of the Green's function asymptotics. The first thing to notice is that in order to determine the Schwartz kernel of $R$, it is sufficient to apply the operator to smooth functions. Thus, we may assume that  $f\in C^\infty_0(\R^d)$.

Let us localize the integral around the minimum point $k_0$ of $\lambda(k)$. In order to do so, we first recall \cite{Kuch_book,Wilcox} that
in a neighborhood of $k_0$, there exist  functions $\psi(k,x)$ and $\vp(k,x)$ that are analytic in $k$ and satisfy
\begin{enumerate}
\item $\psi(k,x) = e^{ik\cdot x}\vp(k,x)$,
\item $\vp(k,x)$ is $\Z^d$-periodic in $x$, and
\item $L(k)\psi(k) = \lam(k)\psi(k)$.
\end{enumerate}
There exists an analytic spectral projector $P(k)$ that projects $L^2(\T)$ onto the span of $\psi(k)$.

Let $\nu,\eta \in C^\infty_0(B)$ (here $B=[-\pi,\pi]^d$) be cut-off functions so that $\nu\equiv 1$ near $k_0$, $\eta|_{\supp \nu}=1$, and $\psi(k)$ is well-defined on $\supp\eta$.
We can decompose
\[
\hat f = \nu\hat f + (1-\nu)\hat f.
\]
The operator $L(k)$ is invertible (in $x$) whenever $k$ stays away from $k_0$. Thus,
\[
\hat u_g(k,x) = L^{-1}(k)(1-\nu(k))\hat f(k,x)
\]
is well-defined and smooth with respect to $(k,x)$ (recall that the operator $L(k)$ is elliptic and depends analytically on $k$). The smoothness of
$\hat u_g$ means (according to the statement 4 of Lemma \ref{L:floquet}) that $u_g$ has rapid decay in $x$ (we thus chose the subscript $g$ in $u_g$ to indicate that this is a ``good'' part of $u$).

Next, we must solve
\[
L(k)\hat u(k) = \nu(k)\hat f(k).
\]
We will look for a solution $\hat u$ of the form $\hat u = \hat u_1 + \hat u_2$ where $\hat u_1 = P(k)\hat u_1$,
$\hat u_2 = Q(k)\hat u_2$, and $Q(k) = I-P(k)$. In this spirit, let $\hat f_1 = P(k)\nu(k)\hat f$, and $\hat f_2 = Q(k)\nu(k)\hat f$. Observe that
\[
\hat f_1 = P(k)\eta(k)\nu(k)\hat f = \eta(k) P(k)\nu(k)\hat f = \eta(k)\hat f_1
\]
since $\eta$ is a function of $k$ and not of $x$.
With the two orthogonal projection operators $P(k)$ and $Q(k)$, the problem of solving $L(k) \hat u(k) = \hat f(k)$ decomposes into
\[
\left(\begin{array}{c|c}
L(k)P(k) & 0 \\ \hline
0 & L(k)Q(k)
\end{array}\right)
\begin{pmatrix} \hat u_1 \\ \hat u_2 \end{pmatrix} = \begin{pmatrix} \hat f_1 \\ \hat f_2 \end{pmatrix}.
\]
In this block-matrix structure, the first column corresponds to a one-dimensional subspace (the range of $P(k)$), while the second one corresponds to the infinite dimensional range of $Q(k)$. Since the operator $L(k)Q(k)$, considered on the range of $Q(k)$, has no spectrum near zero, the operator
$(L(k)Q(k))^{-1}$ is well-defined and so is $\hat u_2$. Additionally, $\hat u_2$ is smooth, so according to Lemma \ref{L:floquet} again, $u_2$ decays rapidly
as $|x|\to\infty$.

We know that $\hat u_1 = P(k) \hat u_1$ and $L(k)\psi(k) = \lam(k)\psi(k)$. Putting these facts together, we need $\hat u_1$ to
satisfy the equality
$$
L(k)P(k) \hat u_1 = P(k) \hat f_1 = \eta(k)P(k)\hat f_1,
$$
which can be rewritten consecutively in the following forms:
\begin{align*}
L(k) \frac{\big( \hat u, \psi(k) \big)_{\T}}{\|\psi(k)\|_{L^2(\T)}^2} \psi(k)
&= \frac{\eta(k)\big(\hat f_1,\psi(k) \big)_{\T}}{\|\psi(k)\|_{L^2(\T)}^2} \psi(k), \\
\frac{\big( \hat u_1, \psi(k)\big)_{\T}}{ \|\psi(k)\|_{L^2(\T)}^2 } L(k) \psi(k)
&= \frac{\eta(k)\big(\hat f_1,\psi(k) \big)_{\T}}{\|\psi(k)\|_{L^2(\T)}^2}  \psi(k), \\
\frac{\big( \hat u_1, \psi(k)\big)_{\T}}{\|\psi(k)\|_{L^2(\T)}^2} \lam(k) \psi(k)
&= \frac{\eta(k) \big(\hat f_1,\psi(k) \big)_{\T}}{\|\psi(k)\|_{L^2(\T)}^2} \psi(k). \\
\end{align*}
Therefore, it is useful to define
\[
\hat u_1(k,x) := \frac{ \eta(k) (\hat f_1(k,\cdot),\psi(k,\cdot))_{\T} \psi(k,x)}
{\|\psi(k)\|_{L^2(\T)}^2 \lam(k)}.
\]
Applying the inverse Floquet transform, we have for $n\in\Z^d$,
\[
u_1(x+n) = \frac{1}{(2\pi)^d} \int_{B} e^{i n\cdot k} \frac{ \eta(k)(\hat f_1(k,\cdot),\psi(k,\cdot))_{\T} \psi(k,x)}
{\|\psi(k)\|_{L^2(\T)}^2 \lam(k)} \, dk.
\]

\subsection{From $u_1$ to a ``reduced'' Green's function}

We now pass from $u_1(x+n)$ to computing, up to lower order terms at infinity, the Green's function. Below, we will define a function $\Go(x,y)$ that provides
the leading term of the asymptotics of the Green's function
$G(x,y)$. To do so, our plan is to isolate a scalar integral, whose asymptotic behavior is responsible for the asymptotics of the Green's function.

We introduce $\Go(x,y)$ as follows:
\[
u_1(x+n) = \int_{\R^d} G_0(x+n,y) f_1(y)\, dy.
\]
Observe that
\begin{align*}
u_1(x+n)
&= \frac{1}{(2\pi)^d} \int_B \int_\T  e^{in\cdot k} \hat f_1(k,y) \eta(k) \frac{\overline{\psi(k,y)}\psi(k,x)}
{\|\psi(k,\cdot)\|_{\T}^2 \lam(k)}\, dy\, dk \\
&= \frac{1}{(2\pi)^d} \int_B \eta(k) \int_\T \sum_{\gamma\in\Z^d} e^{in\cdot k} f_1(y+\gamma) e^{-ik\cdot\gamma}
\frac{\overline{\psi(k,y)}\psi(k,x)} {\|\psi(k,\cdot)\|_{\T}^2 \lam(k)}\, dy\, dk \\
&= \frac{1}{(2\pi)^d} \int_B \eta(k)\sum_{\gamma\in\Z^d}\int_{\gamma+\T}  e^{in\cdot k} f_1(y) e^{-ik\cdot\gamma}
\frac{\overline{\psi(k,y-\gamma)}\psi(k,x)} {\|\psi(k,\cdot)\|_{\T}^2 \lam(k)}\, dy\, dk
\end{align*}
We notice that
\[
\psi(k,y-\gamma) = e^{ik\cdot(y-\gamma)}\vp(k,y) = e^{-ik\cdot\gamma}\psi(k,y),
\]
so $\overline{\psi(k,y-\gamma)} = e^{ik\cdot\gamma} \overline{\psi(k,y)}$ and
\[
u_1(x+n) = \int_{\R^d} f_1(y) \left( \frac{1}{(2\pi)^d} \int_B   e^{i n\cdot k} \eta(k)
\frac{\overline{\psi(k,y)}\psi(k,x)} {\|\psi(k,\cdot)\|_{\T}^2 \lam(k)}\, dk \right) \, dy.
\]
Thus, we have
\[
\Go(x+n ,y) = \frac{1}{(2\pi)^d} \int_B   e^{in\cdot k} \eta(k)
\frac{\overline{\psi(k,y)}\psi(k,x)} {\|\psi(k,\cdot)\|_{\T}^2 \lam(k)}\, dk
\]
Note that if $y \in\T$ and $\gamma\in\Z^d$, then
\[
\Go(x+n ,y+\gamma ) = \frac{1}{(2\pi)^d} \int_B   e^{i(x+n-(y+\gamma))\cdot k} \eta(k)
\frac{\overline{\vp(k,y)}\vp(k,x)} {\|\vp(k,\cdot)\|_{\T}^2 \lam(k)}\, dk
\]
or more simply,
\begin{equation}\label{eqn:formula for G(x,y)}
\Go(x,y) = \frac{1}{(2\pi)^d} \int_B   e^{i(x-y)\cdot k} \eta(k)
\frac{\overline{\vp(k,y)}\vp(k,x)} {\|\vp(k,\cdot)\|_{\T}^2 \lam(k)}\, dk.
\end{equation}

\section{Asymptotics of the Green's function}\label{S:proof}
We start with deducing asymptotics in the reduced case.
\subsection{Reduced Green's function $\Go$}
Let
\[
\rho(k) := \frac{\overline{\vp(k,y)}\vp(k,x)} {\|\vp(k,\cdot)\|_{\T}^2}
\]
Then $\rho$ is a smooth function in $k$ near $k_0$, $x$, and $y$.
Let $\mu_0(k-k_0)=\eta(k)$, so $\mu_0$ is a cutoff function supported near $0$. Then
\begin{equation}\label{E:mu0}
\Go(x,y) = \frac{1}{(2\pi)^d} \int_{B} e^{i(x-y)\cdot k} \frac{\mu_0(k-k_0)\rho(k)}{\lam(k)} \, dk.
\end{equation}
It will be shown in the next subsection that the integral in (\ref{E:mu0}) provides the main term of the asymptotics for the full Green's function
$G(x,y)$.

By Taylor expanding around $k_0$, we can write
$$
\rho(k) = \rho(k_0) + (k-k_0)\cdot \rho_1(k).$$
The integral then becomes
\begin{align}\label{eqn:u1(x+n)}
\Go(x,y) &=  \frac{\rho(k_0)}{(2\pi)^d} \int_{B} e^{i(x-y)\cdot k} \frac{ \mu_0(k-k_0)}{\lam(k)} \, dk \nonumber \\
&+  \frac{1}{(2\pi)^d} \int_{B} e^{i(x-y)\cdot k} \frac{\mu_0(k-k_0)(k-k_0)\cdot \rho_1(k)}{\lam(k)}\, dk.
\end{align}
We will see that the main term comes from the first integral on the righthand side of
\eqref{eqn:u1(x+n)}. It is this integral that we address first.

Assumption A2 means that the matrix $H(\lam,k_0)$ from Assumption A4 is positive definite.
Let $A = H(\lam,k_0)$. Then
\begin{equation}\label{E:lambdaA}
\lam(k) =  \frac 12 (k-k_0)^T A (k-k_0) + g_1(k-k_0),
\end{equation}
where $g_1=O(|k-k_0|^3)$.

Let $A^{1/2}$ be the positive definite square root of $A$. Then, after the changes of variables $z=k-k_0$ and $\xi = A^{1/2}z$, we get
\begin{align*}
& \int_{B} e^{i(x-y)\cdot k} \frac{ \mu_0(k-k_0)}{\lam(k)} \, dk\\ &=
e^{i(x-y)\cdot k_0} \int_{\R^d} e^{i (A^{-1/2}(x-y))^T  A^{1/2}z} \frac{ \mu_0(A^{-1/2} A^{1/2}z)}{\frac 12 (A^{1/2}z)^T  A^{1/2}z + g_1(A^{-1/2} A^{1/2}z)} \, dz \\
&= \frac{2 e^{i(x-y)\cdot k_0}} {(\det A)^{1/2}} \int_{\R^d} \frac { e^{i A^{-1/2}(x-y) \cdot \xi}} {\xi\cdot\xi + g(\xi)} \mu(\xi)\, d\xi.
\end{align*}
Here $g= 2 g_1\circ A^{-1/2} = O(|\xi|^3)$ and $\mu = \mu_0\circ A^{-1/2}$.

Let $x_0 = A^{-1/2}(x-y)$. We have now reduced the problem to finding the asymptotics of the following integral in $\R^n$:
\begin{equation}\label{eqn:rotate and dilate, lam=0}
\frac{1}{(2\pi)^d} \int_{\R^d} \frac { e^{ix_0\cdot\xi} }{\xi\cdot\xi + g(\xi)} \mu(\xi)\, d\xi,
\end{equation}
where $g(\xi) = O(|\xi|^3)$. The main goal is to eliminate the extra term $g(\xi)$, after which we have a situation similar to the ordinary Laplacian.
To do this, we prove the following proposition:

\begin{prop}\label{thm:fourier transform arg for lam=0}
Let $A$, $g$, and $\mu$ be as in (\ref{eqn:rotate and dilate, lam=0}). If $|x_0|\geq 2$, then
\[
\frac{1}{(2\pi)^d}\int_{\R^d}
\frac{e^{i x_0\cdot\xi}\mu(\xi)} {\xi\cdot\xi + g(\xi)} \, d\xi
= \frac{C_{d}}{|x_0|^{d-2}}\big( 1+ O(|x_0|^{-1}) \big), |x_0|\to \infty,
\]
where
\begin{equation}\label{eqn:Cd}
C_d = \frac{\Gamma(\frac d2 -1)}{4 \pi^{d/2}}.
\end{equation}
\end{prop}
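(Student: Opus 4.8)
The plan is to reduce the integral to the standard Fourier-transform identity for $(\xi\cdot\xi)^{-1}$ in $\R^d$, namely that the Fourier transform of $|\xi|^{-2}$ is $C_d|x|^{-(d-2)}$ with $C_d=\Gamma(\tfrac d2-1)/(4\pi^{d/2})$ (valid for $d\geq 3$, so that $|\xi|^{-2}$ is locally integrable), and then to show that all the modifications we have made — the cutoff $\mu$ and the cubic perturbation $g(\xi)$ — contribute only an error of relative size $O(|x_0|^{-1})$. So I would split
\[
\frac{1}{\xi\cdot\xi+g(\xi)}=\frac{1}{\xi\cdot\xi}-\frac{g(\xi)}{(\xi\cdot\xi)(\xi\cdot\xi+g(\xi))}=\frac{1}{\xi\cdot\xi}+E(\xi),
\]
and similarly handle the cutoff by writing $\mu=1-(1-\mu)$. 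The term $1-\mu$ is smooth, compactly supported away from the origin (it vanishes near $0$), so its contribution, times the smooth symbol $1/(\xi\cdot\xi+g(\xi))$, is a Schwartz function in $x_0$ and decays faster than any power; this is negligible. It remains to handle the principal integral $\int e^{ix_0\cdot\xi}\mu(\xi)/(\xi\cdot\xi)\,d\xi$ and the error integral $\int e^{ix_0\cdot\xi}\mu(\xi)E(\xi)\,d\xi$.

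For the principal integral, I would again write $\mu=1-(1-\mu)$: the full integral $\int e^{ix_0\cdot\xi}/(\xi\cdot\xi)\,d\xi$ is exactly $C_d|x_0|^{-(d-2)}$ (interpreted as a tempered distribution / oscillatory integral, justified by the homogeneity and rotation invariance of $|\xi|^{-2}$), and the correction $\int e^{ix_0\cdot\xi}(1-\mu(\xi))/(\xi\cdot\xi)\,d\xi$ is the Fourier transform of an $L^1$ function that is smooth away from infinity but only has a $|\xi|^{-2}$ tail; integrating by parts (or using that $1-\mu$ is supported in $|\xi|\geq c$ where $1/|\xi|^2$ is smooth, but noting it is not compactly supported) one sees this piece decays like $|x_0|^{-(d-2)}\cdot o(1)$ — more carefully, one shows it is $O(|x_0|^{-(d-1)})$ or at worst $o(|x_0|^{-(d-2)})$, which after factoring out $C_d|x_0|^{-(d-2)}$ gives the $1+O(|x_0|^{-1})$ shape. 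The cleanest route here is to observe $(1-\mu(\xi))|\xi|^{-2}$ differs from $|\xi|^{-2}$ by a Schwartz function (the part near the origin), hence has the same homogeneous leading behavior, and the difference between its Fourier transform and $C_d|x_0|^{-(d-2)}$ is Schwartz.

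For the error term, the key point is that $E(\xi)=-g(\xi)/((\xi\cdot\xi)(\xi\cdot\xi+g(\xi)))$ behaves like $|\xi|^{3}/|\xi|^{4}=|\xi|^{-1}$ near the origin, which is one power less singular than $|\xi|^{-2}$. By the same homogeneity heuristic as for the Riesz kernel, a symbol behaving like $|\xi|^{-1}$ near $0$ and smooth-with-compact-support otherwise has Fourier transform decaying like $|x_0|^{-(d-1)}$, i.e. exactly one power faster than the main term — this is where the $O(|x_0|^{-1})$ relative error comes from. To make this rigorous I would use a Littlewood–Paley / dyadic decomposition of $E(\xi)\mu(\xi)$ near the origin: write $E\mu=\sum_{j\geq 0}\beta_j(\xi)$ with $\beta_j$ supported in $|\xi|\sim 2^{-j}$, use that $\|\beta_j\|_{L^1}\lesssim 2^{-j(d-1)}$ and that derivatives of $\beta_j$ of order $N$ are $\lesssim 2^{-j(d-1)}2^{jN}$, so that non-stationary phase / integration by parts gives $|\widehat{\beta_j}(x_0)|\lesssim 2^{-j(d-1)}\min(1,(2^{-j}|x_0|)^{-N})$; summing the geometric-type series over $j$ yields $O(|x_0|^{-(d-1)})$. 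Then $r$-type lower order contributions are absorbed similarly.

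The main obstacle I expect is the careful bookkeeping in the error estimate: one must track that $g(\xi)=O(|\xi|^3)$ together with its derivative bounds (it is a genuine smooth function, being $\lambda$ minus a quadratic, so $\partial^\alpha g=O(|\xi|^{3-|\alpha|})$ for $|\alpha|\leq 3$ and bounded for larger $|\alpha|$ on the support of $\mu$), propagate these through the quotient $E(\xi)$ to get $\partial^\alpha E=O(|\xi|^{-1-|\alpha|})$, and then run the dyadic/non-stationary-phase argument uniformly in $j$. Everything else — the distributional Fourier transform of $|\xi|^{-2}$, the Schwartz decay of the smooth cutoff remainders — is standard. The region $|\xi|$ bounded away from $0$ contributes only Schwartz errors because there the full integrand is a smooth compactly supported (in the $\mu$ part) or rapidly decaying function. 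I would organize the write-up as: (i) reduce to $\mu$ supported near $0$ modulo Schwartz errors; (ii) replace $(\xi\cdot\xi+g)^{-1}$ by $(\xi\cdot\xi)^{-1}$ plus $E$; (iii) identify $\int e^{ix_0\cdot\xi}\mu/(\xi\cdot\xi)\,d\xi=C_d|x_0|^{-(d-2)}+$ (Schwartz); (iv) bound $\int e^{ix_0\cdot\xi}\mu E\,d\xi=O(|x_0|^{-(d-1)})$ by the dyadic argument; (v) combine to get $C_d|x_0|^{-(d-2)}(1+O(|x_0|^{-1}))$.
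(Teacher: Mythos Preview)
Your strategy is correct and will deliver the result, but it differs from the paper's route and contains two misstatements you should repair.

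\textbf{Comparison with the paper.} The paper does not use a single remainder $E(\xi)$ with a dyadic estimate. Instead it expands $\mu/(|\xi|^2+g)$ as the \emph{finite} geometric sum $\frac{\mu}{|\xi|^2}\sum_{n=0}^{d+1}(-g/|\xi|^2)^n$ plus a $C^d_0$ remainder, and then Taylor-expands each $g^n$ so that every error term has the explicit homogeneous form $\xi^\alpha/|\xi|^{2n+2}$ with $|\alpha|\ge 3n$ (hence degree $\ge -1$). Integration by parts in the $x_0$-direction lowers each to degree exactly $-1$, and a short real-space convolution lemma then gives the $O(|x_0|^{-(d-1)})$ bound. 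For the leading piece $\mu/|\xi|^2$ the paper does not write $\mu=1-(1-\mu)$; it replaces $\mu$ by $\tilde\theta$, the Fourier transform of a compactly supported radial bump $\theta$, so that the inverse transform is literally $\Psi*\theta(x_0)=\Psi(x_0)$ by the mean-value property of the Newtonian potential, while the leftover $(\mu-\tilde\theta)/|\xi|^2$ vanishes at $0$ and is handled by the same degree-$(-1)$ machinery. Your Littlewood--Paley approach is more systematic and avoids the finite-expansion bookkeeping; the paper's route is more elementary (no derivative bounds on the full quotient $E$) and keeps the homogeneous structure explicit.

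\textbf{Two fixes needed.} First, $1-\mu$ is \emph{not} compactly supported (since $\mu\in C^\infty_0$, one has $1-\mu\equiv 1$ at infinity), and $g$ is only defined on $\supp\mu$; so your step (i) as written is ill-posed. It is also unnecessary: the integrand already carries the compactly supported factor $\mu$. Second, in step (iii) you claim $(1-\mu)|\xi|^{-2}$ differs from $|\xi|^{-2}$ by a Schwartz function; it does not (the difference is $-\mu|\xi|^{-2}$, singular at $0$). The conclusion you want --- that the Fourier transform of $(1-\mu)|\xi|^{-2}$ is $O(|x_0|^{-N})$ for every $N$ on $|x_0|\ge 2$ --- is nevertheless true, because $(1-\mu)|\xi|^{-2}$ is a $C^\infty$ symbol of order $-2$, so $\partial^\alpha\bigl((1-\mu)|\xi|^{-2}\bigr)\in L^1(\R^d)$ once $|\alpha|>d-2$, which gives arbitrary polynomial decay of its Fourier transform away from the origin. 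With these two corrections your outline goes through.
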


In the following proof we use the notation
$c_z$ for a constant that depends on $z$ only. The value of $c_z$ may change from line to line.

\begin{proof}If the support of $\mu$ is small enough, then since $g(\xi) = O(|\xi|^3)$, we have
\begin{align*}
\frac{\mu(\xi)}{|\xi|^2+g(\xi)} &= \frac{\mu(\xi)}{|\xi|^2} \frac{1}{1 + |\xi|^{-2}g(\xi)} \\
&= \frac{\mu(\xi)}{|\xi|^2} + \frac{\mu(\xi)}{|\xi|^2} \sum_{n=1}^{d+1} (-1)^n \frac{g(\xi)^n}{|\xi|^{2n}} + r_1(\xi)
\end{align*}
where $r_1(\xi) \in C^d_0(\R^d)$. Next, by Taylor expanding $g(\xi)$, we can write
\begin{equation}\label{E:mu_expansion}
\frac{\mu(\xi)}{|\xi|^2+g(\xi)} = \frac{\mu(\xi)}{|\xi|^2}  + \mu(\xi) \sum_{n=1}^{d+1} \sum_{3n \leq |\alpha| \leq 2n+2+d}\frac{g_{\alpha,n}\xi^\alpha}{|\xi|^{2n+2}} + r(\xi)
\end{equation}
for some constants $g_{\alpha,n}$ and $r\in C^d_0(\R^d)$.

We start by handling the lower order terms in (\ref{E:mu_expansion}) first, namely the terms with $\xi^\alpha/|\xi|^{2n+2}$ and $r(\xi)$.
\subsubsection{Terms of the form $\xi^\alpha/|\xi|^{2n+2}$}
\begin{lemma}
Suppose that $h(x)$ is such that its Fourier transform $\tilde h(\xi)$ is a function homogeneous of order $-1$. Then
\begin{equation}\label{eqn:F.t. of homog funct deg -1}
\Big| \int_{\R^d} e^{ix_0\cdot \xi} \tilde h(\xi) \mu(\xi)\, d\xi\Big| \leq \frac{C}{|x_0|^{d-1}}
\end{equation}
for $|x_0|\geq 2$.
\end{lemma}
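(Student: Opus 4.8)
The plan is to reduce the estimate to a statement about the decay of the Fourier transform of a smooth, compactly supported multiplier times a homogeneous distribution, and then exploit the homogeneity to extract the sharp power. First I would recall the basic principle: if $\tilde h(\xi)$ is homogeneous of degree $-1$, then as a tempered distribution its inverse Fourier transform $h(x)$ is homogeneous of degree $-(d-1)$ away from the origin. Convolving the smooth cutoff $\mu$ into the picture, the integral in \eqref{eqn:F.t. of homog funct deg -1} is, up to normalization, $(h * \check\mu)(x_0)$, where $\check\mu$ is Schwartz. Since $h$ is locally integrable away from $0$ and grows like $|x|^{-(d-1)}$ at infinity while $\check\mu$ decays rapidly, the convolution at a point $x_0$ with $|x_0|$ large inherits the bound $|x_0|^{-(d-1)}$. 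That is essentially the whole argument; the bookkeeping is in making the singularity of $h$ at the origin and the (possible) mild singularity of $\tilde h$ at $\xi = 0$ harmless.

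The key steps, in order, would be: (i) Write $\int e^{ix_0\cdot\xi}\tilde h(\xi)\mu(\xi)\,d\xi = (2\pi)^d (h * \check\mu)(x_0)$, being careful that $\tilde h$ may fail to be locally integrable near $\xi=0$ (homogeneity $-1$ is integrable near $0$ in dimension $d\ge 2$, so in fact for $d\ge 3$ this is not an issue — but I would note that the terms actually arising in \eqref{E:mu_expansion} have $\xi^\alpha/|\xi|^{2n+2}$ with $|\alpha|\ge 3n$, so the worst homogeneity is $\ge 3n - (2n+2) = n-2 \ge -1$, consistent with the hypothesis). (ii) Split $x_0$-space: for the region where the $\check\mu$-variable is within $|x_0|/2$ of the origin, use $|h(x_0 - t)| \lesssim |x_0 - t|^{-(d-1)} \lesssim |x_0|^{-(d-1)}$ and integrate against the $L^1$ function $\check\mu$; for the complementary region, where $|t| \ge |x_0|/2$, use rapid decay of $\check\mu$ to beat any polynomial growth of $h$. (iii) Near $t = x_0$ itself, $h$ has an integrable singularity (since $d-1 < d$), and $\check\mu$ is bounded there, so that piece contributes $\lesssim |x_0|^{-(d-1)}$ as well after a routine local estimate. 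Summing the three contributions gives the claimed $C/|x_0|^{d-1}$.

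An alternative, perhaps cleaner route avoids convolution entirely: integrate by parts in $\xi$. On the support of $\mu$ away from $\xi = 0$, the function $e^{ix_0\cdot\xi}$ satisfies $e^{ix_0\cdot\xi} = |x_0|^{-2}(-i\,x_0\cdot\nabla_\xi) e^{ix_0\cdot\xi}$ up to constants, so each integration by parts gains a factor $|x_0|^{-1}$ at the cost of one $\xi$-derivative falling on $\tilde h(\xi)\mu(\xi)$, which lowers the homogeneity of $\tilde h$ by one each time and hence worsens the singularity at $\xi = 0$. One localizes with a dyadic partition of unity in $|\xi|$: on the annulus $|\xi| \sim 2^{-j}$ one integrates by parts $N$ times, gaining $(|x_0| 2^{-j})^{-N}$ and losing $2^{jN}$ from the derivatives hitting the $(-1)$-homogeneous factor, while the measure of the annulus is $\sim 2^{-jd}$; summing the geometric series over $j$ with $N = d$ produces exactly $|x_0|^{-(d-1)}$, with the endpoint annulus $2^{-j} \sim |x_0|^{-1}$ contributing the dominant term. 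I expect the main obstacle to be precisely this careful tracking of the competition between the gain $|x_0|^{-1}$ per integration by parts and the loss $2^{j}$ from differentiating the homogeneous singularity — i.e., choosing the number of integrations by parts correctly on each dyadic scale and verifying the resulting series converges to the sharp exponent $d-1$ rather than something weaker. Either route works; I would write up the dyadic integration-by-parts version since it generalizes cleanly to the higher-order terms $\xi^\alpha/|\xi|^{2n+2}$ treated later in the paper.
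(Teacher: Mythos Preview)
Your first route---writing the integral as $(2\pi)^d(h*\check\mu)(x_0)$ and splitting physical space into three regions (near $x_0$, near the origin, and the rest)---is exactly what the paper does. The paper writes $\int h(x-x_0)\tilde\mu(x)\,dx$ and splits into $\{|x-x_0|\le\tfrac12|x_0|\}$, $\{|x|<1\}$, and $\{|x-x_0|\ge\tfrac12|x_0|,\ |x|\ge1\}$, using $|h(y)|\lesssim|y|^{-(d-1)}$ together with the rapid decay of $\tilde\mu$; this matches your steps (ii)--(iii) essentially line for line.

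Your preferred alternative, the dyadic integration-by-parts argument, is a genuinely different (and correct) route. It trades the one-line Fourier identification $h\sim|x|^{-(d-1)}$ for a direct frequency-side computation, choosing the number $N$ of integrations by parts adaptively on each scale $|\xi|\sim 2^{-j}$ and summing the resulting geometric series. What it buys you is robustness: it requires no separate knowledge of the spatial profile of $h$ and extends mechanically to symbols with worse local singularities or with additional structure. What the paper's approach buys is brevity---once you accept that a $(-1)$-homogeneous $\tilde h$ has $h$ locally integrable and $O(|x|^{-(d-1)})$ at infinity, the three-region estimate is a few lines. For the specific lemma at hand the convolution proof is shorter; the dyadic argument would be overkill here, though your instinct that it generalizes more cleanly to the $\xi^\alpha/|\xi|^{2n+2}$ terms is reasonable (the paper instead handles those by a preliminary integration by parts to reduce the homogeneity to $-1$ and then invokes the present lemma).
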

\begin{proof}
Let us observe that
\begin{align*}
\int_{\R^d} e^{ix_0\cdot \xi} \tilde h(\xi) \mu(\xi)\, d\xi &= \int_{\R^d} h(x) \widetilde{e^{ix_0\cdot\xi}\mu}(x)\, dx\\
&= \int_{\R^d} h(x) \tilde \mu(x-x_0)\, dx
= \int_{\R^d} h(x-x_0) \tilde\mu(x)\, dx.
\end{align*}
Since $\mu\in C^\infty_0(\R^d)$, its inverse Fourier transform $\tilde\mu$ is a Schwartz function and hence rapidly decreasing, i.e., there exist constants $C_\ell$ for $\ell\geq 0$
so that for
\[
|D^\ell \tilde\mu(x)| \leq \frac{C_\ell}{|x|^\ell}.
\]

Next, for $|x_0|\geq 2$, it follows that
\begin{align*}
\Big|  \int_{\R^d} h(x-x_0) \tilde\mu(x)\, dx \Big|
&\leq \Big| \int_{|x-x_0|\leq \frac 12|x_0|} \frac{1}{|x-x_0|^{d-1}} \frac{C_d}{|x|^d}\, dx\Big|
+ \Big| \int_{|x|<1} \frac{2^{d-1}}{|x_0|^{d-1}} C_0\, dx\Big| \\
&+ \Big| \int_{\atopp{|x-x_0|\geq \frac{|x_0|}2}{|x|\geq 1}} \frac{2^{d-1}}{|x_0|^{d-1}} \frac{C_{d+1}}{|x|^{d+1}}\, dx \Big|
\leq \frac{c_{d,\mu}}{|x_0|^{d-1}}
\end{align*}
for some constant $c_{d,\mu}$ that depends on $\mu$ and $d$. This proves the claim of the lemma.
\end{proof}

We will now use an integration by parts argument to reduce to the case of degree $-1$ homogeneity. If $\alpha = (\alpha_1,\dots,\alpha_n)$, then
\[
\frac{\p}{\p\xi_j} \Big(\frac{\xi^\alpha}{|\xi|^\ell}\Big) = \frac{-\ell \xi_j \xi^\alpha}{|\xi|^{\ell+2}} + \frac{\alpha_j \xi_1^{\alpha_1}\cdots \xi_j^{\alpha_j-1}\cdots \xi_n^{\alpha_n}}{|\xi|^\ell},
\]
a function that is homogeneous of degree $|\alpha|-1-\ell$. Assume that $x_0 = (x_0^1,\dots,x_0^n)$ and $|x_0| \sim |x_0^1|$.
Then for $|\alpha|\geq \ell-1$ and $q:=|\alpha|- \ell+1$ one has
\begin{align*}
\Big|\int_{\R^d} e^{ix_0\cdot\xi} \mu(\xi) \frac{\xi^\alpha}{|\xi|^\ell}\, d\xi\Big|
&\leq \frac{C}{|x_0^1|^{|\alpha|-\ell+1}}\sum_{j=0}^{|\alpha|-\ell+1} \Big| \int_{\R^d} \frac{\p^j}{\p\xi_1^j}\Big(\frac{\xi^\alpha}{|\xi|^\ell}\Big)
\frac{\p^{q-j}\mu(\xi)}{\p \xi_1^{q-j}} \, dx\Big|.
\end{align*}
If $j=|\alpha|-\ell+1$, then we can use (\ref{eqn:F.t. of homog funct deg -1}) with $\tilde h(\xi) = \frac{\p^{|\alpha|-\ell+1}}{\p\xi_1^{|\alpha|-\ell+1}} \big(\frac{\xi^\alpha}{|\xi|^\ell}\big)$ and compute
that
\[
\frac{1}{|x_0^1|^{|\alpha|-\ell+1}}\Big| \int_{\R^d} \frac{\p^{|\alpha|-\ell+1}}{\p\xi_1^{|\alpha|-\ell+1}}\Big(\frac{\xi^\alpha}{|\xi|^\ell}\Big) \mu(\xi)\, d\xi\Big|
\leq C \frac{1}{|x_0|^{|\alpha|-\ell+1+d-1}}.
\]

If $j\neq |\alpha|-\ell+1$, then there exists $c$ so that $c^{-1} \leq |\xi|\leq c$ if $\xi\in\supp \frac{\p^j \mu}{\p\xi_1^j}$
since $\mu\in C^\infty_0$ and $\mu\equiv 1$ in a neighborhood of $0$. In this case, we can integrate by parts as needed to obtain decay of $O(|x_0|^{-d+1})$ (or of any higher power of
$|x_0|$).

Combining our estimates, there exists a constant $C = C(\mu,d,g)$ so that
\[
\bigg| \int_{\R^d} e^{ix_0\cdot \xi}  \mu(x) \sum_{n=1}^{d+1} \sum_{3n \leq |\alpha| \leq 2n+2+d}\frac{g_{\alpha,n}\xi^\alpha}{|\xi|^{2n+2}}\, d\xi\bigg|
\leq \frac{C}{|x_0|^{d-1}}.
\]

\subsubsection{Estimate for the $r(\xi)$ term}

Since $r(\xi)\in C^d_0(\R^d)$, we may integrate by parts $(d-1)$ times and observe that
\[
\bigg|\int_{\R^d}
e^{i |x_0|\xi_1} r(\xi)  \, d\xi\bigg| = \frac{1}{ |x_0|^{d-1}} \int_{\R^d}  \bigg|\frac{\p^{d-1} r(\xi)}{\p \xi_1^{d-1}}\bigg| d\xi \leq \frac{c_{d}} {|x_0|^{d-1}}.
\]

\subsubsection{The main term estimate}

We will show that for the constant $C_d$ from (\ref{eqn:Cd}), the following standard asymptotics holds:
\[
\frac{1}{(2\pi)^d}\int_{\R^d} \frac{e^{i x_0\cdot \xi} \mu(\xi)} {|\xi|^2} \, d\xi
= \frac{C_{d}}{|x_0|^{d-2}} + O(|x_0|^{-d+1}).
\]
Let $\theta$ be a radial, smooth approximation of the identity with $\supp\theta \subset B(0,\frac 14)$ and $\int_{\R^d} \theta\, dx =1$.
Recall that the volume of the unit ball in $\R^d$ is $\omega_d = \frac{\pi^{d/2}}{\Gamma(\frac d2 +1)}$ and the Newtonian potential is
\[
\Psi(x) = \frac{1}{d(d-2)\omega_d} \frac{1}{|x|^{d-2}}.
\]
Since $\Psi$ is a harmonic function on $B(x_0,\frac 14)$, it follows from the mean-value property that
\[
\Psi*\theta(x_0) = \Psi(x_0).
\]
Therefore, denoting by $\beta (x)$ the function whose Fourier transform is $\mu(\xi)$, we get
\[
\int_{\R^d} \frac{e^{i x_0\cdot\xi} \mu(\xi)} {|\xi|^2} \, d\xi
=\int_{\R^d} \frac{e^{i x_0\cdot\xi} \tilde\theta(\xi)} {|\xi|^2} \, d\xi + \int_{\R^d} \frac{e^{i x_0\cdot\xi}(\tilde\beta - \tilde\theta)(\xi)} {|\xi|^2} \, d\xi.
\]
The inverse Fourier transform applied to $\tilde\theta(\xi)/|\xi|^2$ yields
\[
\frac{1}{(2\pi)^d}\int_{\R^d} \frac{e^{i x_0\cdot\xi} \tilde\theta(\xi)} {|\xi|^2} \, d\xi
= \Psi*\theta(x_0) = \Psi(x_0).
\]
Also, since $\tilde\theta(0) = \tilde\beta(0)=1$, we can write
\[
\frac{(\tilde\beta - \tilde\theta)(\xi)} {|\xi|^2} = \frac{1}{|\xi|^2}\sum_{1\leq|\alpha|\leq d+1} \xi^\alpha \mu_\alpha(\xi) + \mu'(\xi)
\]
where $\mu_\alpha$ is rapidly decreasing in $|\xi|$ and $\mu' \in C^d_0(\R^d)$.

Now the argument provided earlier in the proof leads to the desired error estimate
\begin{equation}\label{E:error est with theta_r}
\int_{\R^d}e^{i x_0\cdot\xi}\Big(\frac{1}{|\xi|^2}\sum_{1\leq|\alpha|\leq d+1} \xi^\alpha \mu_\alpha(\xi) + \mu'(\xi)\Big)\, d\xi
= O(|x_0|^{1-d}).
\end{equation}
 Finally, since $\mu_\alpha$ depends on $\theta$ and $\beta$,
so the $O$-term in (\ref{E:error est with theta_r}) is bounded independently of $|x_0|$.
\end{proof}

In order to finish the proof of the asymptotics of the reduced Green's function $\Go$, it remains to bound the last term in (\ref{eqn:u1(x+n)}):
\[
 \frac{1}{(2\pi)^d} \int_{B} e^{i(x-y)\cdot k} \frac{(k-k_0)\cdot\rho_1(k)}{\lam(k)}\mu_0(k)\, dk.
\]
This integral can be estimated similarly to the technique used for controlling the error terms in Proposition \ref{thm:fourier transform arg for lam=0}.
Let us use a Taylor expansion in $(k-k_0)\cdot \rho_1(k)$ around $k_0$. In the notation from the proof of Proposition \ref{thm:fourier transform arg for lam=0}, we have
\[
\frac{(k-k_0)\cdot \rho_1(k)}{\lam(k)} \mu_0(k-k_0) = \mu(\xi) \sum_{n=0}^{d+1} \sum_{3n+1\leq |\beta|\leq 2n+2+d} c_{\beta,n} \frac{\xi^\beta}{|\xi|^{2n+2}} + R(\xi).
\]
The remainder function $R(\xi)$ belongs to $C^d_0(\R^d)$, and as in the proof of Proposition \ref{thm:fourier transform arg for lam=0} it follows that
\[
\frac{1}{(2\pi)^d} \int_{B} e^{i x_0\cdot\xi}\Big( \frac{\mu(\xi)}{|\xi|^2} \sum_{1\leq |\beta|\leq d} c_\beta \xi^\beta + R(\xi)\Big) \, d\xi = O(|x_0|^{1-d}).
\]

If we collect our estimates, we arrive to the following intermediate result:

 \begin{thm}\label{T:reduced}
 Under the conditions of Theorem \ref{T:main}, the following asymptotics holds for the reduced Green's function:
\begin{multline*}
\Go(x,y) = \frac{1}{(2\pi)^d} \int_B   e^{i(x-y)\cdot k}
\frac{\overline{\vp(k,y)}\vp(k,x)} {\|\vp(k,\cdot)\|_{\T}^2 \lam(k)}\, dk \\
=  \frac{\Gamma(\frac d2 -1) e^{i(x-y)\cdot k_0}}
{2 \pi^{d/2}(\det A)^{1/2} } \left( \frac{1 }{|A^{-1/2}(x-y)|}\right)^{d-2}
\frac{\vp(k_0,x) \overline{\vp(k_0,y)}} {\|\vp(k_0)\|_{L^2(\T)}^2}  \big(1 +  O(|x-y|^{-1}) \big) \\+ O(|x-y|^{-d+1}).
\end{multline*}
\end{thm}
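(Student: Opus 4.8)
The plan is to assemble Theorem \ref{T:reduced} by combining the pieces already in hand. Recall from \eqref{eqn:formula for G(x,y)} and \eqref{E:mu0} that
\[
\Go(x,y) = \frac{1}{(2\pi)^d} \int_B e^{i(x-y)\cdot k} \frac{\mu_0(k-k_0)\rho(k)}{\lam(k)}\, dk,
\qquad \rho(k) = \frac{\overline{\vp(k,y)}\vp(k,x)}{\|\vp(k,\cdot)\|_{\T}^2},
\]
and that Taylor expansion of $\rho$ splits this into the two integrals in \eqref{eqn:u1(x+n)}. First I would handle the second integral, the one with $(k-k_0)\cdot\rho_1(k)$: after the affine change of variables $z = k-k_0$, $\xi = A^{1/2}z$ exactly as in the reduction to \eqref{eqn:rotate and dilate, lam=0}, its integrand expands (using $\lam(k)^{-1} = (\tfrac12|\xi|^2 + g(\xi))^{-1}$ and the geometric-series/Taylor bookkeeping from \eqref{E:mu_expansion}) into a sum of terms $\mu(\xi)\xi^\beta/|\xi|^{2n+2}$ with $|\beta|\geq 3n+1\geq 1$ plus a $C^d_0$ remainder $R(\xi)$. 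Each such term is governed by the homogeneity-$(-1)$ lemma and the integration-by-parts argument already carried out inside the proof of Proposition \ref{thm:fourier transform arg for lam=0}, so this whole integral is $O(|x-y|^{-d+1})$. This is the content of the displayed estimate just before the theorem statement.

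Next I would treat the first integral in \eqref{eqn:u1(x+n)}, the one carrying the constant $\rho(k_0)$. The change of variables performed in the text turns
\[
\int_B e^{i(x-y)\cdot k}\frac{\mu_0(k-k_0)}{\lam(k)}\, dk
= \frac{2 e^{i(x-y)\cdot k_0}}{(\det A)^{1/2}} \int_{\R^d} \frac{e^{i x_0\cdot\xi}}{\xi\cdot\xi + g(\xi)}\,\mu(\xi)\, d\xi,
\qquad x_0 = A^{-1/2}(x-y),
\]
and Proposition \ref{thm:fourier transform arg for lam=0} evaluates the remaining $\R^d$-integral as $(2\pi)^d C_d |x_0|^{-(d-2)}(1+O(|x_0|^{-1}))$ with $C_d = \Gamma(\tfrac d2-1)/(4\pi^{d/2})$. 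Multiplying by $\rho(k_0)/(2\pi)^d$ and observing that, by definition of $\rho$, $\psi$, and $\vp$, $\rho(k_0) = \vp(k_0,x)\overline{\vp(k_0,y)}/\|\vp(k_0)\|_{\T}^2$, I get precisely the claimed leading term: the scalar prefactor $\tfrac{2}{(2\pi)^d}\cdot(2\pi)^d C_d = 2C_d = \Gamma(\tfrac d2-1)/(2\pi^{d/2})$, divided by $(\det A)^{1/2}$, times $e^{i(x-y)\cdot k_0}$, times $|A^{-1/2}(x-y)|^{-(d-2)}$, times the eigenfunction ratio. Finally I note that $|x_0| = |A^{-1/2}(x-y)| \asymp |x-y|$ since $A$ is positive definite, so the error term $O(|x_0|^{-1})$ may be rewritten as $O(|x-y|^{-1})$ and the $O(|x-y|^{-d+1})$ from the second integral is of the stated form. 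Adding the two contributions gives the displayed formula in Theorem \ref{T:reduced}.

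The only genuine subtlety — and the step I expect to absorb most of the care — is the bookkeeping of smoothness thresholds: one must make sure the cutoff $\mu_0$ is supported closely enough to $k_0$ that (i) the analytic functions $\vp(k,\cdot)$, $\psi(k,\cdot)$ and the projector $P(k)$ are defined there, (ii) $\lam(k) \asymp |k-k_0|^2$ on the support so that the geometric series in $|\xi|^{-2}g(\xi)$ converges and all the fractional terms $\xi^\beta/|\xi|^{2n+2}$ that arise have $|\beta| \geq 3n+1$, hence are at worst homogeneous of degree $-1$ after differentiation, and (iii) enough derivatives survive in the remainders $R(\xi), r(\xi)$ to justify the $(d-1)$-fold integration by parts — this is why the truncations in \eqref{E:mu_expansion} run up to $n = d+1$ and why the remainders land in $C^d_0$ rather than merely $C^0$. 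None of this is deep, but it must be tracked consistently; once it is, Theorem \ref{T:reduced} follows by simply collecting the estimates above. (The passage from $\Go$ to the full Green's function $G$ — i.e. showing the discarded pieces $u_g, u_2$ and the $(1-\nu)\hat f$ part contribute only rapidly decaying terms — is deferred to the next stage of the argument and is not part of this statement.)
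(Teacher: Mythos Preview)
Your proposal is correct and follows essentially the same approach as the paper: the theorem is obtained by collecting the estimates already established --- the Taylor split \eqref{eqn:u1(x+n)}, the change of variables reducing the main integral to \eqref{eqn:rotate and dilate, lam=0}, Proposition \ref{thm:fourier transform arg for lam=0} for the leading term, and the same homogeneity/integration-by-parts machinery for the $(k-k_0)\cdot\rho_1(k)$ remainder. Your tracking of the constants and your observation that $|x_0|\asymp|x-y|$ to convert the error terms are exactly what the paper does implicitly when it says ``If we collect our estimates.''
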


\subsection{The full Green's function asymptotics}
We now need to show that when descending from the true Green's function to the reduced one, we have not changed the principal term of the asymptotics.

Let us recall that the restriction to the reduced Green's function $G_0$ occurred in Section \ref{SS:principal} when we dropped  the operator term
$$
T:=\int^\bigoplus_{\T^*}T(k)d'k,
$$
where $T(k)=\nu(k)(L(k)Q(k))^{-1}+(1-\nu(k))(L(k))^{-1}$ is a smooth operator valued function of $k$. We will show now that the Schwartz kernel of this operator in $\R^n$
decays sufficiently fast to be included into the error term $r(x,y)$ in (\ref{E:expansion}).

To do so, we will need the following auxiliary result:
\begin{lemma}\label{L:local}
         For any compactly supported functions $\phi,\theta\in C^\infty_{0}(\R^d)$,
    the operator norm in $L^2(\R^d)$ of the operator $\phi_{\g'} T \theta_\g$, where $g_\g(x):=g(x+\g)$, satisfies  the estimate
    \begin{equation}\label{E:kern_decay}
        \|\phi_{\g'} T \theta_\g\|=O(|\g-\g'|^{-N}) \quad \text{ for any } N>0  \mbox{ when } |\g-\g'|\to\infty.
    \end{equation}
\end{lemma}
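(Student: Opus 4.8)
\textbf{Proof proposal for Lemma \ref{L:local}.}

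The plan is to exploit the fact that $T(k)$ is a smooth (indeed analytic) operator-valued function of $k$ on all of $\T^*$, and then run the standard Paley--Wiener / integration-by-parts argument in the quasimomentum variable. First I would write out the Schwartz kernel of $\phi_{\g'} T\theta_\g$ explicitly via the Floquet inversion formula: for $u\in L^2(\R^d)$,
\[
(T\theta_\g u)(x) = \int_{\T^*} e^{ik\cdot \g'} \big(T(k)\,\widehat{\theta_\g u}(k,\cdot)\big)(x-\g')\,\dn k \quad\text{for } x\in W+\g',
\]
and $\widehat{\theta_\g u}(k,\cdot)$ depends analytically on $k$ because $\theta_\g u$ is compactly supported. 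Pairing against $\phi_{\g'}$ and $\theta_\g$ and tracking the shifts, the bilinear form $(\phi_{\g'}T\theta_\g u, v)$ becomes an integral over $\T^*$ of $e^{i(\g'-\g)\cdot k}$ times a smooth $\T^*$-periodic function of $k$ whose values are operators acting between fixed spaces $L^2(W)$, with norms controlled by $\|u\|_{L^2}\|v\|_{L^2}$ uniformly in $\g,\g'$ (the cut-offs $\phi,\theta$ are fixed).

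Next I would integrate by parts in $k$ repeatedly. Each application of $\partial/\partial k_j$ to the smooth operator-valued integrand produces another smooth operator-valued integrand with uniformly bounded norm (here one uses that $T(k)$ is $C^\infty$ in $k$ with all derivatives bounded, which follows from analyticity of $L(k)$ in $k$, analyticity and uniform boundedness of the resolvent $(L(k)Q(k))^{-1}$ away from the spectrum of $L(k)Q(k)|_{\Ran Q(k)}$ near $0$, smoothness of the spectral projector $P(k)$ near $k_0$, and invertibility of $L(k)$ on $\supp(1-\nu)$), while the exponential factor $e^{i(\g'-\g)\cdot k}$ contributes a factor of $i(\g'_j-\g_j)$. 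Choosing the direction $j$ with $|\g'_j-\g_j|\gtrsim |\g'-\g|$ and integrating by parts $N$ times gives
\[
\big|(\phi_{\g'}T\theta_\g u, v)\big| \le \frac{C_N}{|\g-\g'|^N}\,\|u\|_{L^2(\R^d)}\,\|v\|_{L^2(\R^d)},
\]
with $C_N$ independent of $\g,\g'$, which is exactly the claimed operator-norm bound \eqref{E:kern_decay} since $N$ is arbitrary. No boundary terms arise because the integration is over the torus $\T^*$.

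The main obstacle — really the only point requiring care — is making precise the claim that $T(k)$ and all its $k$-derivatives are bounded operators on $L^2(\T)$ with norms uniform in $k\in\T^*$, i.e. that the two pieces $\nu(k)(L(k)Q(k))^{-1}$ and $(1-\nu(k))L(k)^{-1}$ genuinely glue into a globally smooth operator-valued function and that the resolvent bounds do not degenerate. For the first piece one invokes Assumption A2 (and A1, A4): on $\supp\nu$ the operator $L(k)$ restricted to $\Ran Q(k)$ has spectrum bounded away from $0$ uniformly, so its inverse there is analytic and uniformly bounded, and $P(k),Q(k)$ are analytic near $k_0$. For the second piece, on the compact set $\supp(1-\nu)\subset\T^*\setminus\{k_0\}$ the operator $L(k)$ is boundedly invertible with norm of the inverse continuous, hence bounded, in $k$; analyticity of $k\mapsto L(k)$ then gives analyticity of $k\mapsto L(k)^{-1}$ there. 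One should also note that after each differentiation one may need to pass a resolvent or projector between the differential operator $L(k)$ and the test functions, but since everything is composed with the fixed smoothing cut-offs $\phi_{\g'},\theta_\g$ (or, equivalently, since $\widehat{\theta_\g u}$ is already as smooth in $k$ as we like), elliptic regularity keeps all intermediate quantities in fixed Sobolev spaces on $W$ with uniform bounds. Once this uniform smoothness is in hand, the integration-by-parts step is routine.
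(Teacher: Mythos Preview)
Your proposal is correct and follows essentially the same route as the paper: after reducing via the Floquet decomposition, the operator $\phi_{\g'}T\theta_\g$ becomes (up to a unimodular factor) the $(\g-\g')$th Fourier coefficient of the smooth operator-valued function $k\mapsto \phi\,T(k)\,\theta$ on the torus $\T^*$, and rapid decay of Fourier coefficients of a smooth function gives \eqref{E:kern_decay}. The paper simply invokes this Fourier-coefficient decay in one line, whereas you spell out the underlying integration-by-parts argument and give a more detailed justification of the uniform smoothness of $T(k)$; the paper also first reduces (via partitions of unity) to the case $\supp\phi,\supp\theta\subset W$, which cleans up the bookkeeping you allude to.
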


\begin{proof}
By using unity partitions and shifts, one can assume without loss of generality
that $\supp\phi\subset W$ and $\supp\theta\subset W$. Then the Floquet direct integral decomposition of the operator $\phi_{\g'} T \theta_\g$ is (up to an absolute value $1$ exponential factor)
$$
\phi(x)\int^\bigoplus_{\T^*} e^{ik\cdot (\g-\g')}T(k) \theta (x)d'k,
$$
where the functions $\phi$ and $\theta$ are considered as the multiplication operators by them in $L^2(\T^*)$. Then the estimate (\ref{E:kern_decay}) is just the decay estimate for Fourier coefficients of an infinite differentiable vector (operator) valued function on the torus (in this case, $\phi T(k) \theta$).
\end{proof}

The operator norm estimate (\ref{E:kern_decay}) is weaker than the pointwise estimate of the Schwartz kernel $r$ of the remainder in Theorem \ref{T:main}. However, one can bootstrap it from (\ref{E:kern_decay}) to these pointwise estimates. This can be done in different ways. One would be to use Schauder estimates, while another is to apply the similar reasoning to the Floquet expansion of the operator $T$ in a sufficiently smooth Sobolev space $H^s(\R^d)$ instead of $L^2$, which would give corresponding smooth norms estimates on the Schwartz kernel of $\phi_{\g'} T \theta_\g$.

This finishes the proof of Theorem \ref{T:main}.

\section{Final remarks}\label{S:remarks}
\begin{itemize}
  \item In the case when the weaker condition A$3^\prime$ is satisfied, one needs to add the asymptotics coming from all non-degenerate isolated extrema.
  \item It is easy to reformulate (by spectral shift and, if necessary, changing the sign of the operator) the asymptotics without assuming that $\lambda_0=0$ and that this is the upper, rather than lower bank of the gap.
  \item One might wonder whether one truly needs to have the additive error $r$ in (\ref{E:expansion}) of Theorem \ref{T:main}, not only the multiplicative one. This seems to be indeed necessary, unless one is dealing with the bottom of the spectrum. The reason is that the eigenfunctions $\phi$ in (\ref{E:expansion}) will have zeros, and thus the additive error $r$ cannot be pulled in as a part of the multiplicative one.
  \item Due to absence of restrictions on the coefficients of the operator $L$, the result holds for an arbitrary lattice of periods, not necessarily $\Z^d$.
  \item The result carries over without any essential changes in the arguments to periodic second order elliptic operators on abelian coverings of compact manifolds (as in \cite{KuchPin1,KuchPin2,Kuch_Molch,LinPinch}).
  \item The smoothness conditions on the coefficients of $L$ are exceedingly severe. One can easily follow the proof to detect that a finite smoothness would suffice. We decided not to pursue this line of study in this article.
\item As we have already briefly mentioned, the asymptotics we get has the spirit of an unconventional  analog of ``homogenization,'' which, unlike the classical homogenization, occurs at the internal gap edges, rather than at the bottom of the spectrum. One can find other results that can be interpreted as incarnations of such homogenization in \cite{KuchPin1,KuchPin2,Birman-homog,BirmSu8,BS3,KotSun,KotSun2}.
\end{itemize}
\section*{Acknowledgments}
P.K. expresses his thanks to V. Papanicolaou, Y. Pinchover, and T.~Tsuchida for useful discussions and suggestions. The work of P.K. was supported in part by IAMCS through the KAUST Award No. KUS-C1-016-04. The work of A.R. was supported in part by the NSF grant DMS-0855822. The authors express their gratitude to these institutions for the support.

\end{document}